\RenewDocumentCommand{\title}{om}{%
   \IfNoValueTF{#1}
     {\gdef\shorttitle{Feynman integral in quantum walk}}%
     {\gdef\shorttitle{#1}}%
   \gdef\@title{#2}%
}
\newtheorem{theorem}{Theorem}[section]
\newtheorem{lemma}[theorem]{Lemma}
\newtheorem{definition}[theorem]{Definition}
\newtheorem{remark}[theorem]{Remark}
\def\square{\hbox{\vrule\vbox{\hrule\phantom{o}\hrule}\vrule}}
\newcommand{\be}{\begin{equation}}
\newcommand{\ee}{\end{equation}}
\newcommand{\til}[1]{\widetilde{#1}}
\numberwithin{equation}{section}
\newcommand{\N}{\mathbb{N}}
\newcommand{\Z}{\mathbb{Z}}
\newcommand{\R}{\mathbb{R}}
\newcommand{\C}{\mathbb{C}}
\newcommand{\SM}{\mathbb{S}}
\newcommand{\W}{{\mathcal W}}
\newcommand{\cS}{{\mathcal S}}
\newcommand{\cA}{{\mathcal A}}
\newcommand{\cB}{{\mathcal B}}
\newcommand{\cU}{{\mathcal U}}
\newcommand{\e}{\varepsilon}
\newcommand{\pphi}{\varphi}
\newcommand{\cT}{\mathcal{T}}
\newcommand{\re}{{\rm Re}\hskip 1pt }
\newcommand{\im}{{\rm Im}\hskip 1pt }
\newcommand{\ord}{{\mathcal O}}
\newcommand{\ope}[1]{{\operatorname{#1}}}
\newcommand{\mc}[1]{{\mathcal{#1}}}
\numberwithin{equation}{section}
\begin{document}

\title{Feynman Integral in Quantum Walk,
Barrier-top Scattering and Hadamard Walk 
}
\author{Kenta Higuchi}


\begin{abstract}
The aim of this article is to relate the discrete quantum walk on $\mathbb{Z}$ with the continuous Schr\"odinger operator on $\mathbb{R}$ in the scattering problem. Each point of $\mathbb{Z}$ is associated with a barrier of the potential, and the coin operator of the quantum walk is determined by the transfer matrix between bases of WKB solutions on the classically allowed regions of both sides of the barrier. This correspondence enables us to represent each entry of the scattering matrix of the Schr\"odinger operator as a countable sum of probability amplitudes associated with the paths of the quantum walker. In particular, the barrier-top scattering corresponds to the Hadamard walk in the semiclassical limit.
\end{abstract}
\maketitle
{\it Keywords:} Feynman integral; Hadamard walk; quantum walk; semiclassical limit.

\section{Introduction}
We consider the  one dimensional semiclassical Schr\"odinger equation 
\begin{align}\label{Scheq}
(P-\lambda)\pphi=0,\quad
P(h)=-h^2\frac{d^2}{dx^2} +V(x), \quad x\in \R,
\end{align}
where $h$ is a positive small parameter, $V(x)$ is a  real-valued  function on $\R$. 
Under the short-range condition, the scattering matrix $\SM_\ope{QM}$ is defined as a $2\times2$ matrix which expresses the outgoing waves in terms of the incoming waves. 
Each entry of this matrix, say the $(1,1)$-entry, corresponds to waves coming from $+\infty$ and going to $-\infty$.
It is related to Feynman path integral of the complex probability amplitudes over all  continuous paths from $+\infty$ to $-\infty$. 

In the semiclassical limit, however, it turned out in the barrier-top scattering that the Feynman integral reduces to a countable sum
over classical trajectories reflecting or transmitting at the barrier-tops (\cite{Fu}).

On the other hand, the scattering matrix is defined for the quantum walk on $\Z$, 
and each entry can be computed as a countable sum over all possible paths of the quantum walker of  complex numbers determined by the coin operators that each path passes through (\cite{KKMS}).

A connection  between quantum walk and the Schr\"odinger equation in scattering problem has recently been studied in \cite{HKSS,MMOS}
for point interactions by delta functions.
In this paper, we extend this connection to general short range potentials. On the Schr\"odinger side, we only assume that the potential is continuous in the classically allowed regions. Such a region is interrupted by a finite number of ``barriers", where the potential may present barrier-top, step-like singularity or even delta functions etc.

 In each classically allowed intervals, we construct a pair of solutions to the Schr\"odinger equation (outgoing and incoming WKB solutions). We define a ``local scattering matrix'' at each barrier which expresses the outgoing solutions from this barrier in terms of
 incoming one to this barrier. This matrix is unitary when the wronskian of the pair does not depend on the interval.
 Then we obtain a quantum walk on $\Z$ with these local scattering matrices as coin operators which define the time evolution $\cU$. 
 
Under this correspondence, a state $\Psi$ of this quantum walk is stationary;
\be\label{eq:StationaryQW}
\cU\Psi=\Psi,
\ee
if and only if each element of $\Psi$ gives the linear combination of solutions of the stationary Schr\"odinger equation in the corresponding interval (Lemma \ref{thm:QWQM1}), and it follows that the scattering matrix of the Schr\"odinger equation and that of the quantum walk coincide  (Theorem \ref{thm:SMatrix}). 
This enables us to compute the scattering matrix of the Schr\"odinger equation applying the counting path method in the quantum walk. 

An interesting example is the barrier-top scattering since the quantum tunneling effect is not negligible in this case.
The semiclassical asymptotic behavior of the transfer matrix, and hence the local scattering matrix at each barrier-top is already known
(\cite{Ra,FR,Fu, CPS}) and we recover the result of \cite{Fu} which gives a Feynman representation of the scattering matrix in the semiclassical limit.
From our point of view, we can say that the barrier-top scattering 
corresponds to the Hadamard walk in the semiclassical limit
 (see Sec.~\ref{Sec:BTandHW}). 



\section{Feynman integral in quantum walk}
Here we give an interpretation of quantum walks as a projection of the scattering problems of the Schr\"odinger equation. 
Then the Feynman integral in quantum walks is given by a countable sum by this interpretation.  
\subsection{Scattering problem for the Schr\"odinger equation}\label{Sec:Sch}
We consider scattering problems for the one dimensional Schr\"odinger equation \eqref{Scheq} with a continuous real valued potential $V$. 
We assume the following so-called short range condition (V1):

\vspace{0.2cm}
\noindent
\textbf{(V1)}
There exist $C>0$, $\epsilon>0$ such that $\left|V(x)\right|\le C(1+\left|x\right|)^{-1-\epsilon}$. 

Let $\lambda$ be positive. Under (V1), there exist Jost solutions $J_\ope{in}^\pm (x),$ $J_\ope{out}^\pm (x)$ characterized by the asymptotic behavior 
as $x\to\pm\infty$ (see e.g., \cite{ReSi}):
\begin{align*}
&e^{i\sqrt{\lambda}\, \left|x\right|/h}
J_\ope{in}^\pm \to1,\quad e^{-i\sqrt{\lambda}\, \left|x\right|/h}
J_\ope{out}^\pm \to1.
\end{align*}
Remark that $J_\ope{in}^\pm=\overline{J_\ope{out}^\pm}$. 
The pairs $(J_\ope{in}^+, J_\ope{out}^+)$ and $(J_\ope{out}^-, J_\ope{in}^-)$ make bases of solutions to \eqref{Scheq}, and the  matrix $\mathbb T$ changing the basis
\be
(J_\ope{in}^+,J_\ope{out}^+)=(J_\ope{out}^-,J_\ope{in}^-)\mathbb T,
\ee
belongs to the subgroup $\cT$ of $\ope{SL}(2,\C)$:
\be\label{eq:formTn}
\cT:=\{T=(t_{j,k})\in \ope{SL}(2,\C);\,t_{11}=\overline{t_{22}},\,t_{12}=\overline{t_{21}} \}.
\ee
The scattering matrix $\SM_{\ope{QM}}$ is the $2\times2$ matrix defined by
\be\label{eq:SMQM}
(J_\ope{in}^+,J_\ope{in}^-)=(J_\ope{out}^-,J_\ope{out}^+)\SM_\ope{QM}.
\ee
It is a unitary matrix
with the same non-vanishing diagonal entries i.e. an element of the set
\be
\cS:=\{M=(m_{j,k})\in { U}(2);\,m_{11}=m_{22}\neq0\},
\ee 
and obtained from $\mathbb T$ by the map $\mc{M}$ from $\cT$ to $\cS$ defined by
\be\label{eq:DefM}
\mc{M}\begin{pmatrix}p&\bar{q}\\q&\bar{p}\end{pmatrix}
:=\frac{1}{\bar{p}}\begin{pmatrix}1&\bar{q}\\-q&1\end{pmatrix},
\ee
This map is bijective and the inverse is given by
$$
\mc{M}^{-1}\begin{pmatrix}a&b\\c&a\end{pmatrix}
=\begin{pmatrix}\bar{a}^{-1}&a^{-1}b\\\bar{a}^{-1}\bar{b}&a^{-1}\end{pmatrix}
=\begin{pmatrix}\bar{a}^{-1}&-\bar{a}^{-1}\bar{c}\\-a^{-1}c&a^{-1}\end{pmatrix}.
$$
The set $\cS$ can be regarded as a group with the operation $*$ induced from $\cT$:
\be
S_1*S_2:=\mc{M}((\mc{M}^{-1}S_1)(\mc{M}^{-1}S_2)),\quad
S_j\in\cS.
\ee
\begin{remark}
The definition \eqref{eq:SMQM} of the scattering matrix is slightly modified from the usual one, $(J_\ope{in}^-,J_\ope{in}^+)=(J_\ope{out}^+,J_\ope{out}^-)\til{\SM}_\ope{QM}$. 
\end{remark}

Now we fix  a closed interval $\Lambda=[\lambda_0,\lambda_1]$ with $0<\lambda_0\le \lambda_1$.
The condition (V1) implies that the classically forbidden region $\{x\in\R;\,\lambda_0\le V(x)\}$ consists of a finite number of closed intervals $(K_n)_{n=1}^{n_0}$, $K_n=[x_{2n-1},x_{2n}]$, with a strictly increasing sequence $(x_n)_{n=1}^{2n_0}\subset\R$.

The complement of $(K_n)_{n=1}^{n_0}$ consists of classically allowed intervals $I_n:=(x_{2n},x_{2n+1})$ for $n=1,2,\ldots,n_0-1$, $I_0:=(-\infty,x_1)$ and $I_{n_0}:=(x_{2n_0},+\infty)$. 
On each $I_n$, there exists a solution $g_n$ to the Schr\"odinger equation \eqref{Scheq} for $\lambda \in\Lambda$ such that the pair $(g_n,\bar{g}_n)$ forms a basis of the solutions. Here $\bar{g}_n$ is the complex conjugate of $g_n$. Denote $g_{0}=J_\ope{out}^-$ and $g_{n_0}=J_\ope{in}^+$,
and let $T_n$ be the $2\times 2$ matrix defined for $n=1,2,\ldots,n_0$ by 
\be\label{eq:DefTn}
(g_{n-1},\bar{g}_{n-1})T_n=(g_n,\bar{g}_n).
\ee
Then we have ${\mathbb T}=T_1T_2\cdots T_{n_0}$.
We assume without loss of generality that the Wronskians are independent of $n\in\Z$ and that the pair $(g_{n-1},\bar{g}_n)$ is linearly independent for any $n\in\Z$. Then each transfer matrix $T_n$ belongs to $\cT$.
The scattering matrix can be obtained by the image of the product of the transfer matrices $T_n$ by $\mc{M}$, 
\be\label{eq:SMandTn}
\SM_\ope{QM}=\mc{M}({\mathbb T})
=\mc{M}(T_1)*\mc{M}(T_2)*\cdots*\mc{M}(T_{n_0}).
\ee 
Note that we have $(g_n,\bar{g}_{n-1})=(g_{n-1},\bar{g}_n)\mc{M}(T_n)$. 

\subsection{Discrete time quantum walk on $\Z$}\label{Sec:DTQWZ}
We recall a standard definition of the two-state quantum walk model on $\Z$ (see e.g. \cite{MMOS}). 
A quantum state is a bounded sequence $\Psi=(\Psi(n))_{n\in\Z}\in\cB:=l^\infty(\Z;\C^2)$ of column vectors $\Psi(n)\in\C^2$, and 
its discrete time evolution is given by a bounded linear operator $\cU$ on the Banach space $\cB$. This operator $\cU$ is determined by a sequence $(U_n)_{n\in\Z}$ of $2\times2$ unitary matrices in the following way: 

\begin{definition}\label{def:QW}
For $\Psi=(\Psi(n))_{n\in\Z}\in\cB$, we define $\cU\Psi\in\cB$ by 
\be\label{eq:TimeEvQW}
(\cU\Psi)(n)=P_{n+1}\Psi(n+1)+Q_{n-1}\Psi(n-1),
\ee
where the matrices $P_n,\,Q_n$ are given by 
$$
P_n:=\begin{pmatrix}1&0\\0&0\end{pmatrix}U_n,\quad
Q_n:=\begin{pmatrix}0&0\\0&1\end{pmatrix}U_n.
$$
\end{definition}

We consider the scattering problem for the quantum walk under the following conditions:

\noindent
\textbf{(U1)}
$U_n\in \cS$ for all $n\in \mathbb Z$. 

\noindent
\textbf{(U2)}
$U_n=I_2$ except for a finite number of $n\in\Z$.

It is not difficult to see that the condition (U1), the non-occurrence of the perfect reflection, implies the existence and the uniqueness of the stationary states, that is, for any $\bm{v}\in\C^2$ and $n_1\in\Z$, there exists a unique state $\Psi=(\Psi(n))_{n\in\Z}\in\cB 
$ satisfying the equation \eqref{eq:StationaryQW} and $\Psi(n_1)=\bm{v}$.

Let $N_0$ be a positive integer such that $U_n=I_2$ for all $|n|\ge N_0$ and $\Psi_\ope{in}^\pm, \Psi_\ope{out}^\pm\in \cB$ be the states satisfying the stationary equation \eqref{eq:StationaryQW} and the conditions
\be
\begin{aligned}
&
\Psi_\ope{in}^-(-N_0)=(0,1)^T,\quad
&\Psi_\ope{out}^-(-N_0)=(1,0)^T,\\
&
\Psi_\ope{in}^+(N_0)=(1,0)^T,\quad
&\Psi_\ope{out}^+(N_0)=(0,1)^T.
\end{aligned}
\ee
Obviously $\Psi_\ope{in}^\pm, \Psi_\ope{out}^\pm$ are independent of the choice of $N_0$. 
We call $\Psi_\ope{in}^\pm$ and $\Psi_\ope{out}^\pm$ {\it incoming} and {\it outgoing} states.

We define the scattering matrix $\SM_\ope{QW}$ for the quantum walk in analogy with the Schr\"odinger case.
\begin{definition}
The scattering matrix $\SM_\ope{QW}$ for the quantum walk is the $2\times 2$ matrix defined by
\be\label{eq:DefSM}
(\Psi_\ope{in}^+,\Psi_\ope{in}^-)=(\Psi_\ope{out}^-,\Psi_\ope{out}^+)\SM_\ope{QW}.
\ee
\end{definition}

\begin{remark}
This scattering matrix $\SM_\ope{QW}$ coincides with the one $\widehat{\Sigma}(\theta)$ defined in \cite{Mo,KKMS} with $\theta=0$. Remark that in \cite{Mo,Su} the condition for $U_n$ with large $|n|$ is  weaker  than $(U2)$. 
But for the purpose of the comparison with the Schr\"odinger equations, it is enough to consider  quantum walks  satisfying $(U2)$. 
\end{remark}
\begin{remark}
The scattering matrix $\SM_\ope{QW}$ is equivalently defined by 
\be
\SM_\ope{QW}=U_{-N_0+1}*U_{-N_0+2}*\cdots*U_{N_0-1}.
\ee
\end{remark}

Next we define the Feynman integral in quantum walk. 
For this, it is convenient to regard $\mathbb Z$ as directed graph.
Let $(n;R):=(n-1,n)$ and $(n;L):=(n+1,n)$ represent the right and left oriented arcs respectively, and set $\cA=\{(n;R),(n;L);\,n\in\Z\}$. We identify $\cB=l^\infty({\mathbb Z};{\mathbb C}^2)$ with $l^\infty(\cA)$ by the map $\iota:\cB\to l^\infty(\cA)$ defined  for $\Psi=((\Psi_1(n),\Psi_2(n))^T)_{n\in\Z}\in\cB$ by
\be\label{eq:IdA}
(\iota\Psi)(n;L)=\Psi_1(n),\quad
(\iota\Psi)(n;R)=\Psi_2(n).
\ee
By this identification, the operator $\cU':=\iota\,\cU\iota^{-1}$ satisfies, for $\psi\in l^\infty(\cA)$,
\be
\begin{aligned}
&(\cU'\psi)(n;L)=a_{n+1}\psi(n+1;L)+b_{n+1}\psi(n+1;R),\\
&(\cU'\psi)(n;R)=c_{n-1}\psi(n-1;L)+a_{n-1}\psi(n-1;R),
\end{aligned}
\ee
where $a_n,\,b_n,\,c_n$ stand for the $(1,1),(1,2),(2,1)$-entries of $U_n$.

Each finite walk $\gamma$ on $\Z$ is a finite consecutive sequence of arcs $\gamma=(A_0,A_1,\ldots,A_{L(\gamma)})$ such that the origin $o(A_l)$ of $A_l$ coincides with the terminus $t(A_{l-1})$ of $A_{l-1}$ for $l$ from 1 to $L(\gamma)$, the length of the walk $\gamma$. 

We associate  with each finite walk $\gamma$ a probability amplitude 
\be\label{eq:DefPA}
\Phi(\gamma):=\prod_{l=1}^{L(\gamma)}\Phi(A_{l-1},A_l),
\ee
where
\be
\Phi(A_{l-1},A_l):=\left\{
\begin{aligned}
&a_{o(A_l)}\quad \text{ if }A_{l-1}=(o(A_l);L),\ A_l=(o(A_l)-1;L),\\
&b_{o(A_l)}\quad \text{ if }A_{l-1}=(o(A_l);R),\ A_l=(o(A_l)-1;L),\\
&c_{o(A_l)}\quad \text{ if }A_{l-1}=(o(A_l);L),\ A_l=(o(A_l)+1;R),\\
&a_{o(A_l)}\quad \text{ if }A_{l-1}=(o(A_l);R),\ A_l=(o(A_l)+1;R).
\end{aligned}\right.
\ee

The following theorem asserts that the entries $s_{jk}$ of the scattering matrix $\SM_\ope{QW}$ is represented by a countable sum of the over all possible walk from $(-1)^{k-1}\infty$ to $(-1)^j\infty$:
\begin{theorem}
Let $G_{jk}$ be the countable set of walks 
$\gamma=(A_0,A_1,\ldots,A_{L(\gamma)})$ with $o(A_0)=(-1)^{k-1}N_0$ and $t(A_{L(\gamma)})=(-1)^jN_0$.
Then the $(j,k)$-entry $s_{jk}$ of $\SM_\ope{QW}$ is given by
the series
\be\label{eq:SumPA}
s_{jk}=
(-1)^{j+k}\sum_{L=1}^\infty
\sum_{\gamma\in G_{jk},L(\gamma)=L}
\Phi(\gamma).
\ee
The sum over $\{\gamma\in G_{jk};L(\gamma)=L\}$ is a finite sum for each positive integer $L$, and  the infinite sum with respect to $L$ is absolutely convergent.
\end{theorem}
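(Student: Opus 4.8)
The plan is to pass from the operator-theoretic definition of $\SM_\ope{QW}$ to the combinatorial series \eqref{eq:SumPA} in two movements: first I reinterpret one application of $\cU$ as a single weighted step of a walk, so that powers of $\cU$ produce fixed-length path sums; then I resum these over all lengths and read off the boundary data, the convergence of the resummation being the delicate point.

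Directly from Definition \ref{def:QW} (formula \eqref{eq:TimeEvQW}) and the identification $\iota$ of \eqref{eq:IdA}, the operator $\cU'=\iota\,\cU\,\iota^{-1}$ satisfies $(\cU'\psi)(A)=\sum_{A'}\Phi(A',A)\psi(A')$, the sum over arcs $A'$ with $t(A')=o(A)$; indeed the four branches of the explicit form of $\cU'$ are precisely the four cases defining $\Phi(A',A)$, with weights $a_n,b_n,c_n$. Iterating, $(\cU'^{\,L}\psi)(A)=\sum_{\gamma}\Phi(\gamma)\,\psi(o(A_0))$ over walks $\gamma=(A_0,\dots,A_L)$ of length $L$ terminating at $A$; equivalently, for two arcs $A_{\mathrm{in}},A_{\mathrm{out}}$ the matrix coefficient $\langle e_{A_{\mathrm{out}}},\cU'^{\,L}e_{A_{\mathrm{in}}}\rangle$ equals the coherent amplitude $c_L:=\sum_{L(\gamma)=L}\Phi(\gamma)$ of length-$L$ walks joining them. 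This identifies $\Phi(\gamma)$ with the product of single-step coin entries and reduces \eqref{eq:SumPA} to summing $c_L$ over $L$.

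Next I connect this to $s_{jk}$. I choose $A_{\mathrm{in}},A_{\mathrm{out}}$ as the arcs entering from $(-1)^{k-1}N_0$ and exiting at $(-1)^{j}N_0$, so that the walks with these endpoints are exactly $G_{jk}$ (a walk cannot stray into a region where $U_n=I_2$ and return, since there it only transmits and never reverses). I would recover $s_{jk}$ as the outgoing component of the stationary state produced by resumming the single steps: solving $(1-z\cU')\psi=e_{A_{\mathrm{in}}}$ for $|z|<1$ gives $\psi=\sum_{L\ge0}z^{L}\cU'^{\,L}e_{A_{\mathrm{in}}}$, whose $A_{\mathrm{out}}$-component is $\sum_L z^L c_L$; letting $z\to1$ (limiting absorption) yields the scattering amplitude. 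Equivalently, the Remark's factorization $\SM_\ope{QW}=U_{-N_0+1}*\cdots*U_{N_0-1}$ supplies the same series algebraically: since $\mc M,\mc M^{-1}$ convert $*$ into matrix multiplication in $\cT$, a single $*$ is the Redheffer combination, whose entries carry a factor $(1-r_1'r_2)^{-1}=\sum_{k\ge0}(r_1'r_2)^{k}$; the $k$-th term is the amplitude of a walk reflecting $k$ times between two barriers, and expanding all the nested geometric series matches each monomial with a lattice walk in $G_{jk}$. Either route gives $s_{jk}=(-1)^{j+k}\sum_L c_L$; the sign $(-1)^{j+k}$, present only off the diagonal, is forced by the minus signs in the definition \eqref{eq:DefM} of $\mc M$ together with the convention \eqref{eq:DefSM}.

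Finiteness of $\{\gamma\in G_{jk}:L(\gamma)=L\}$ is immediate: from a fixed starting arc each step admits at most two continuations, so there are at most $2^{L}$ walks of length $L$ and $c_L$ is a finite sum. For $\sum_L|c_L|<\infty$ I would exploit that this is a sum of \emph{coherent} amplitudes, so phase cancellation from unitarity is available; the naive modulus bound is useless because $|a_n|^2+|c_n|^2=1$ forces $|a_n|+|c_n|\ge1$. Since $U_n=I_2$ off a finite set (condition (U2)), the contributing walks live in a fixed finite window; let $P$ be the orthogonal projection of $\ell^2(\cA)$ onto the arcs in that window and set $\Theta:=P\cU'P$. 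As $\cU'$ is unitary, $\Theta$ is a finite-dimensional contraction, and by the confinement above $\langle e_{A_{\mathrm{out}}},\cU'^{\,L}e_{A_{\mathrm{in}}}\rangle=\langle e_{A_{\mathrm{out}}},\Theta^{L}e_{A_{\mathrm{in}}}\rangle$. The crucial point is that $\Theta$ is a \emph{strict} contraction: a unimodular eigenvalue would give a unit vector $v$ with $\cU'v\in\operatorname{ran}P$ and $\|\cU'v\|=\|v\|$, generating a bounded-support $\cU'$-invariant subspace on which $\cU'$ is unitary; but $U_n\in\cS$ (condition (U1)) forces every $a_n\ne0$, so amplitude leaks out of every bounded window and no such subspace exists. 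Hence $\ope{spr}(\Theta)=:\theta<1$, so $|c_L|\le C\theta^{L}$ and $\sum_L|c_L|<\infty$, which also legitimises the limit $z\to1$ of the preceding paragraph. I expect this strict-contraction step—upgrading the nonexpansive unitary bound to geometric decay by ruling out bound states via $a_n\ne0$—to be the main obstacle; the remainder is the bookkeeping of the first two paragraphs.
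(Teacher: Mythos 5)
Your overall strategy coincides with the paper's on the two substantive points. The identification of $\langle e_{A_{\mathrm{out}}},\cU'^{\,L}e_{A_{\mathrm{in}}}\rangle$ with the length-$L$ coherent amplitude is exactly the paper's formula \eqref{eq:FiniteSum}, and your operator $\Theta=P\cU'P$ is precisely the finite matrix $E$ of Lemma \ref{lem:EEsti}; your argument that a unimodular eigenvalue of $\Theta$ would force a compactly supported eigenvector of $\cU$, which is excluded because $a_n\neq0$ (condition (U1)) makes amplitude leak out of every finite window, is a self-contained version of the step the paper delegates to \cite[Lemma 4.2]{KKMS}. That part is sound, as are the confinement remark (walks cannot re-enter once they reach the identity-coin region) and the trivial finiteness of $\{\gamma\in G_{jk}:L(\gamma)=L\}$.

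The genuine gap is the final identification of $\sum_L c_L$ with $(-1)^{j+k}s_{jk}$: both of your proposed routes are left as sketches exactly where the content lies. ``Letting $z\to1$ yields the scattering amplitude'' is the assertion to be proved, and the one-sided sum $\sum_{L\ge0}\cU^{L}e_{A_{\mathrm{in}}}$ solves $(1-\cU)\psi=e_{A_{\mathrm{in}}}$, so it is \emph{not} a stationary state and cannot be directly expanded in $\Psi_\ope{in}^\pm,\Psi_\ope{out}^\pm$. The paper's device is to take the two-sided sum $\sum_{L\in\Z}\cU^{L}\Psi^{(k)}$ (the negative powers contribute nothing at $|n|\le N_0$ but restore exact stationarity), recognize it as the unique stationary state with standard-basis incoming data, and hence write it as $s_{11}\Psi_\ope{out}^-=\Psi_\ope{in}^+-s_{21}\Psi_\ope{out}^+$ for $k=1$ (and similarly for $k=2$); evaluating the relevant component at $(-1)^{j}N_0$ then yields both $s_{jk}$ and the sign. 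In particular your attribution of $(-1)^{j+k}$ to the minus signs in $\mc{M}$ is not the operative mechanism: the sign comes from the convention \eqref{eq:DefSM} together with the normalizations $\Psi_\ope{out}^{+}(N_0)=(0,1)^{T}$, etc. The Redheffer-product alternative could also be completed, but matching each monomial of the nested geometric series with a lattice walk is the whole combinatorial assertion of the theorem, not a remark; one of these identifications must be carried out explicitly for the proof to close.
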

\begin{remark}
The total sum $\displaystyle\sum_{\gamma\in G_{jk}}
\Phi(\gamma)$ is not always absolutely convergent.
\end{remark}
\begin{proof} 
Consider the time evolutions of the initial states $\Psi^{(1)}=(\Psi^{(1)}(n))_{n\in\Z}\in\cB$ and $\Psi^{(2)}=(\Psi^{(2)}(n))_{n\in\Z}\in\cB$  given respectively by 
$$
\begin{aligned}
&
\Psi^{(1)}(N_0)=(1,0)^T\quad
\Psi^{(1)}(n)=0\quad(n\in\Z\setminus\{N_0\}),\\
&
\Psi^{(2)}(-N_0)=(0,1)^T\quad
\Psi^{(2)}(n)=0\quad(n\in\Z\setminus\{-N_0\}).
\end{aligned}
$$
It is not difficult to check that for each positive integer $L$, the finite sum $\sum_{\gamma\in G_{jk},L(\gamma)=L}\Phi(\gamma)$ is given by 
\be\label{eq:FiniteSum}
\sum_{\gamma\in G_{jk},L(\gamma)=L}\Phi(\gamma)
=
(\cU^L\Psi^{(k)})_j((-1)^jN_0).
\ee
Lemma \ref{lem:EEsti} below shows the absolute convergence of the sum over $L$. 
It also shows the pointwise convergence of the infinite sum $\sum_{L\in\Z}\cU^L\Psi^{(k)}$ $(k=1,2)$ i.e., the convergence of $\sum_{L\in \Z}(\cU^L\Psi^{(k)})(n)$ for each $n$, and hence the limit is an element of $\cB$. This is clearly a solution to the stationary equation, and we have 
\be
\begin{aligned}
&\sum_{L\in\Z}\cU^L\Psi^{(1)}
=s_{11}\Psi_\ope{out}^-
=\Psi_\ope{in}^+-s_{21}\Psi_\ope{out}^+,\\
&\sum_{L\in\Z}\cU^L\Psi^{(2)}
=\Psi_\ope{in}^--s_{12}\Psi_\ope{out}^-
=s_{22}\Psi_\ope{out}^+.
\end{aligned}
\ee
This with \eqref{eq:FiniteSum} implies the theorem.
\end{proof}

\begin{lemma}\label{lem:EEsti}
There exist $C>0$ and $0<M<1$ such that the inequality
\be
\|(\cU^L\Psi^{(k)})(n)\|_{\C^2}
\le
CL^{2N_0-1}M^L
\ee
holds for any $\left|n\right|\le N_0$ and $L\in\N$.
\end{lemma}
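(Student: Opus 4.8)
The plan is to reduce this infinite-dimensional estimate to a finite-dimensional one by confining the dynamics to the window $W=\{-N_0,\dots,N_0\}$, and then to extract exponential decay from the fact that the compression of $\cU$ to $W$ has spectral radius strictly less than $1$.

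First I would work on $\ell^2(\Z;\C^2)$ rather than on $\cB$: the states $\Psi^{(k)}$ are finitely supported and $\cU$ is unitary on $\ell^2$ (indeed $\cU=SC$, where $C=\bigoplus_n U_n$ is the block-diagonal coin and $S$ is the component shift sending the first and second entries to the left and right respectively), so $\cU^L\Psi^{(k)}\in\ell^2$ and the pointwise norm in the statement is dominated by the $\ell^2$ norm. Let $\Pi$ be the orthogonal projection onto sequences supported in $W$ and set $\cU_0:=\Pi\cU\Pi$, a $2(2N_0+1)\times2(2N_0+1)$ matrix. The first key point is that amplitude which has left $W$ never returns: using the arc identification \eqref{eq:IdA}, a one-step induction shows that $(\cU^t\Psi^{(k)})_1(n)=0$ for $n\ge N_0+1$ and $(\cU^t\Psi^{(k)})_2(n)=0$ for $n\le -N_0-1$, since to the right of $N_0$ (resp. left of $-N_0$) the coins are trivial and these ``incoming from infinity'' components are never created. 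As the only way $\cU$ feeds $W$ from outside is precisely through these two components, this gives $\Pi\,\cU\,(I-\Pi)\,\cU^t\Psi^{(k)}=0$, hence $\Pi\,\cU^L\Psi^{(k)}=\cU_0^{\,L}\Psi^{(k)}$ for every $L$.

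It then suffices to prove $\rho(\cU_0)<1$. Because $\Pi$ is an orthogonal projection and $\cU$ is unitary, $\|\cU_0\|_{\ell^2}\le1$, so every eigenvalue satisfies $|\mu|\le1$ and any unimodular eigenvalue is semisimple; thus $\rho(\cU_0)<1$ is equivalent to the absence of eigenvalues of modulus one. Suppose $\cU_0 v=\mu v$ with $|\mu|=1$ and $v\in\ell^2(W)$, $v\ne0$. Writing $\cU v=\mu v+w$ with $w=(I-\Pi)\cU v\perp\ell^2(W)$ and using $\|\cU v\|=\|v\|$ forces $w=0$, so in fact $\cU v=\mu v$ with $v$ finitely supported. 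Inspecting this eigenequation at the rightmost point $n_+$ of the support: vanishing of the first component of $(\cU v)(n_+)$ gives $v_1(n_+)=0$, and vanishing of $(\cU v)(n_++1)$ gives $(U_{n_+}v(n_+))_2=0$; since then $v(n_+)=(0,v_2(n_+))^T$ with $v_2(n_+)\ne0$, this reads $a_{n_+}v_2(n_+)=0$, contradicting $a_{n_+}\ne0$ (guaranteed by $U_{n_+}\in\cS$, i.e.\ condition (U1)). Hence no unimodular eigenvalue exists and $\rho(\cU_0)<1$.

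Finally, $\rho:=\rho(\cU_0)<1$ together with the Jordan form of the finite matrix $\cU_0$ yields $\|\cU_0^{\,L}\|\le C\,L^{p-1}\rho^{L}$ with $p$ the size of the largest Jordan block, and combined with the previous step this bounds $\|(\cU^L\Psi^{(k)})(n)\|_{\C^2}\le\|\cU_0^{\,L}\Psi^{(k)}\|$ as required. Since $M$ in the statement is only required to exist, one may fix any $M\in(\rho,1)$ and absorb the polynomial factor, so the precise exponent is inessential; the stated power $L^{2N_0-1}$ is what comes out of an alternative combinatorial count, noting that a confined walk can reverse direction only at the at most $2N_0-1$ interior scattering sites $\{-N_0+1,\dots,N_0-1\}$, that each straight run has length at most $2N_0$ (otherwise the walker escapes), and that each reversal contributes a factor $|b_n|=\sqrt{1-|a_n|^2}<1$. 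I expect the main obstacle to be the strict inequality $\rho(\cU_0)<1$: this is exactly where the non-perfect-reflection hypothesis $a_n\ne0$ enters, and it must be obtained from unitarity together with the boundary analysis of a would-be trapped eigenstate rather than from any soft norm bound, since $\|\cU_0\|$ itself can equal $1$.
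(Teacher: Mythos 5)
Your proof is correct and follows essentially the same route as the paper: both reduce the estimate to powers of a finite matrix (the paper's $(4N_0-2)\times(4N_0-2)$ matrix $E$ on the window $\{-N_0+1,\dots,N_0-1\}$ is your $\cU_0$ up to a slightly different choice of window) and conclude from spectral radius strictly less than $1$ together with the Jordan form. The only substantive difference is that where the paper cites \cite[Lemma 4.2]{KKMS} for the spectral radius bound, you give a self-contained argument via the compression of the unitary $\ell^2$-evolution and the boundary analysis of a would-be trapped eigenvector at the edge of its support, which is precisely where the hypothesis $a_n\neq0$ from (U1) enters.
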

\begin{proof}
Since we consider only finite entries of $\cU^L\Psi^{(k)}$, we introduce the $(4N_0-2)\times(4N_0-2)$-matrix 
$$
E:=\begin{pmatrix}
O				&P_{-N_0+2}	&				&	&\\
Q_{-N_0+1}	&O				&P_{-N_0+3}	&	&\\
				&Q_{-N_0+2}	&O				&\ddots&\\
				&				&\ddots 		&\ddots&P_{N_0-1}\\
				&				&				&Q_{N_0-2}&O
\end{pmatrix}.
$$
Then we have 
$$
\begin{pmatrix}
\cU^L\Psi^{(k)}(-N_0+1)\\\cU^L\Psi^{(k)}(-N_0+2)\\\vdots\\\cU^L\Psi^{(k)}(N_0-1)
\end{pmatrix}
=E^Lv_k,\quad
v_1=\begin{pmatrix}0\\\vdots\\0\\1\\0\end{pmatrix},\ 
v_2=\begin{pmatrix}0\\1\\0\\\vdots\\0\end{pmatrix}\in\C^{4N_0-2}.
$$
A similar argument to \cite[Lemma 4.2]{KKMS} shows that the absolute value of each eigenvalue of $E$ is less than $1$, and that the dimension of each generalized eigenspace is at most $2N_0-1$. 
We denote the maximum of the absolute value of the eigenvalues of $E$ by $M$. Then we have $\|E^Lv\|
\le CL^{2N_0-1}M^L\|v\|
$ for any $v\in\C^{4N_0-2}$, $L\in\N$.
\end{proof}

\begin{figure}
\centering
\includegraphics[bb=0 0 774 157, width=16.5cm]{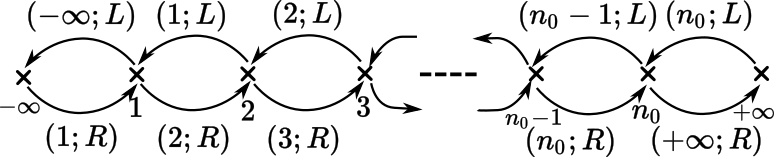}
\caption{Quantum walk on $\cA_{n_0}$}
\label{Fig1}
\end{figure}
Under the assumption (U2), $\Psi(n)$ is independent of $n$ for $n\ge N_0$ and $n\le -N_0$ respectively if $\Psi=(\Psi(n))_{n\in \Z}$ is a solution to the equation \eqref{eq:StationaryQW}.
By regarding $\{n\le-N_0\}$ and $\{n\ge N_0\}$ as infinite tails, such a quantum walk is reduced to the one on the finite subset $\{\left|n\right|\le N_0-1\}$ with two inifinite tails, a particular version of \cite{HiSe} where the authors considered the quantum walk on finite graphs with several infinite tails. Conversely, every quantum walk on a finite subset of $\Z$ can be represented as the one on $\Z$ satisfying the condition (U2). 
For simplicity, we shift $\{\left|n\right|\le N_0-1\}$ to $\{1,2,\ldots,2N_0-1\}$. 

We introduce the quantum walk on $[n_0]:= \{1,2,\ldots,n_0\}$ for an integer $n_0$ greater than 1 by identifying $[n_0]$ with a graph $\mc{G}_{n_0}=(\mc{V}_{n_0},\mc{A}_{n_0})$, where the set of vertices $\mc{V}_{n_0}$ and the set of arcs $\mc{A}_{n_0}$ are given by
\be
\begin{aligned}
&\mc{V}_{n_0}= \{-\infty,1,2,\ldots, n_0, +\infty\},\\
&\mc{A}_{n_0}=\{(n;L), (n;R);\,1\le n\le n_0\}\cup\{(-\infty;L),(+\infty;R)\}.
\end{aligned}
\ee
In this case, the space of the states is the finite dimensional vector space 
$$
\cB:=l^\infty([n_0];\C^2)\oplus l^\infty(\{+\infty,-\infty\};\C)
\cong\C^{2n_0+2},
$$
and the identification $\iota:\cB\to l^\infty(\cA_{n_0})$ is given in a similar way as \eqref{eq:IdA}, that is, 
\be
\begin{aligned}
&(\iota\Psi)(n;L)=\Psi_1(n),\quad
(\iota\Psi)(n;R)=\Psi_2(n),\quad(n=1,2,\ldots,n_0),\\
&(\iota\Psi)(-\infty;L)=\Psi(-\infty),\quad
(\iota\Psi)(+\infty;R)=\Psi(+\infty).
\end{aligned}
\ee
The time evolution $\cU$ is determined by a finite sequence $(U_n)_{n=1}^{n_0}$ of $2\times2$ unitary matrices,  defined by \eqref{eq:TimeEvQW} for $n=2,3,\ldots,n_0-1$, and by
\be\label{eq:DefUonFG}
\begin{aligned}
&(\cU\Psi)(-\infty)
:=(1,0)U_{1}\Psi(1),\quad
(\cU\Psi)(+\infty):=(0,1)U_{n_0}\Psi(n_0),\\
&(\cU\Psi)(1):=P_2\Psi(2)+(0,\Psi_2(1))^T,\\
&(\cU\Psi)(n_0):=(\Psi_1(n_0),0)^T+Q_{n_0-1}\Psi(n_0-1).
\end{aligned}
\ee
Under (U1), the existence and the uniqueness of the stationary states also hold for this case, and we define the scattering matrix $\SM_\ope{QW}$ 
and the probability amplitude $\Phi$ in the same way, \eqref{eq:DefSM}, 
\eqref{eq:DefPA} with $\Z$ replaced by $[n_0]$ and stationary states $\Psi_\ope{in}^\pm$, $\Psi_\ope{out}^\pm$ satisfying
\be
\begin{aligned}
&
(\Psi_\ope{in}^-)_2(1)=1,\ \Psi_\ope{in}^-(-\infty)=0,
&&
(\Psi_\ope{out}^-)_2(1)=0,\ \Psi_\ope{out}^-(-\infty)=1,\\
&
(\Psi_\ope{in}^+)_1(n_0)=1,\ \Psi_\ope{in}^+(+\infty)=0,\ 
&&
(\Psi_\ope{out}^+)_1(n_0)=0,\ \Psi_\ope{out}^+(+\infty)=1.
\end{aligned}
\ee
Then the scattering matrix is computed by the countable sum of probability amplitudes \eqref{eq:SumPA}.

\subsection{Equivalence}
We consider the corresponding quantum walk on $[n_0]$ to the scattering problem given in Sec.~\ref{Sec:Sch} assuming (V1). 
We set $g_0=J_\ope{out}^-$, $g_{n_0}=J_\ope{in}^+$ and $U_n:=\mathcal{M}(T_n)\in\cS$. 
We consider the time evolution $\cU$ determined by the finite sequence $(U_n)_{n=1}^{n_0}$ in the manner of \eqref{eq:DefUonFG}. 
Then we have the following theorem which claims the coincidence of the scattering matrices:
\begin{theorem}\label{thm:SMatrix}
The 
matrices $\SM_\ope{QM}$ and $\SM_\ope{QW}$ coincide with each other.
\end{theorem}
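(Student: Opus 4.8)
The plan is to deduce the identity from Lemma \ref{thm:QWQM1}, which provides a linear bijection between the stationary states of the quantum walk on $[n_0]$ and the solutions of \eqref{Scheq} for $\lambda\in\Lambda$: a stationary state $\Psi$ corresponds to the solution $\pphi$ whose restriction to each classically allowed interval $I_n$ is the linear combination of $g_n$ and $\bar g_n$ read off from $\Psi(n)$, and conversely. Since both scattering matrices are defined by exactly the same change-of-basis relation --- \eqref{eq:SMQM} on the Schr\"odinger side and \eqref{eq:DefSM} on the quantum-walk side --- the whole theorem reduces to checking that this bijection carries the four distinguished solutions to the four distinguished states, namely $J_\ope{in}^\pm\mapsto\Psi_\ope{in}^\pm$ and $J_\ope{out}^\pm\mapsto\Psi_\ope{out}^\pm$.

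First I would make the correspondence at the two tails explicit. Using $g_0=J_\ope{out}^-$, $g_{n_0}=J_\ope{in}^+$ together with $J_\ope{in}^\pm=\overline{J_\ope{out}^\pm}$, the end pairs are $(g_0,\bar g_0)=(J_\ope{out}^-,J_\ope{in}^-)$ and $(g_{n_0},\bar g_{n_0})=(J_\ope{in}^+,J_\ope{out}^+)$. At the left tail I would identify the coefficient of $g_0=J_\ope{out}^-$ (the wave outgoing to $-\infty$) with the value $\Psi(-\infty)$ and the coefficient of $\bar g_0=J_\ope{in}^-$ (the wave incoming from $-\infty$) with $\Psi_2(1)$; symmetrically, at the right tail the coefficient of $g_{n_0}=J_\ope{in}^+$ with $\Psi_1(n_0)$ and the coefficient of $\bar g_{n_0}=J_\ope{out}^+$ with $\Psi(+\infty)$. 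Comparing these with the normalizations defining $\Psi_\ope{in}^\pm,\Psi_\ope{out}^\pm$ then shows, for instance, that $\Psi_\ope{in}^+$ (with $(\Psi_\ope{in}^+)_1(n_0)=1$ and $\Psi_\ope{in}^+(+\infty)=0$) corresponds precisely to the solution with unit coefficient on $J_\ope{in}^+$ and none on $J_\ope{out}^+$, that is, to $J_\ope{in}^+$ itself, and likewise for the remaining three states.

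Once the matching of the four states is in place, I would invoke the linearity of the bijection of Lemma \ref{thm:QWQM1}: applying it to the defining relation \eqref{eq:DefSM} transports it to \eqref{eq:SMQM}, so the two change-of-basis matrices must agree, giving $\SM_\ope{QM}=\SM_\ope{QW}$. As a consistency check, and an alternative route avoiding the tail bookkeeping, one may note that $U_n=\mc{M}(T_n)$ and that $\mc{M}$ intertwines matrix multiplication on $\cT$ with the operation $*$ on $\cS$, so \eqref{eq:SMandTn} already gives $\SM_\ope{QM}=U_1*\cdots*U_{n_0}$; it then suffices to establish the same factorization $\SM_\ope{QW}=U_1*\cdots*U_{n_0}$ for the walk, which one can obtain directly by a transfer-matrix induction that peels off the coins $U_n$ one interval at a time.

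I expect the main obstacle to be the careful bookkeeping of the second paragraph: one must track which arc amplitude (equivalently, which entry of $\Psi(n)$ or which tail value $\Psi(\pm\infty)$) encodes the coefficient of $g_n$ versus $\bar g_n$, and verify that ``incoming'' and ``outgoing'' in the quantum-walk normalization line up with the asymptotic directions of the Jost solutions. This is exactly where the definition \eqref{eq:DefM} of $\mc{M}$, the identity $(g_n,\bar g_{n-1})=(g_{n-1},\bar g_n)\mc{M}(T_n)$, and the conjugation relation $J_\ope{in}^\pm=\overline{J_\ope{out}^\pm}$ must be used consistently; any sign or ordering slip here would swap rows or columns of the resulting matrix.
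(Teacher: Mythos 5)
Your proposal is correct and follows essentially the same route as the paper: the author likewise derives the theorem as an immediate consequence of Lemma \ref{thm:QWQM1}, noting that under that correspondence the Jost solutions $J_\ope{in}^\pm$, $J_\ope{out}^\pm$ map to the states $\Psi_\ope{in}^\pm$, $\Psi_\ope{out}^\pm$, so the identical change-of-basis relations \eqref{eq:SMQM} and \eqref{eq:DefSM} force the two matrices to agree. Your tail bookkeeping (reading $\eta_{0,1}=\Psi(-\infty)$, $\eta_{0,2}=\Psi_2(1)$, $\eta_{n_0,1}=\Psi_1(n_0)$, $\eta_{n_0,2}=\Psi(+\infty)$ against the four normalizations) is exactly the verification the paper leaves implicit, and it checks out.
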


This implies that the scattering matrix $\SM_\ope{QM}$ is obtained by the countable sum of the probability amplitudes of the walks \eqref{eq:SumPA}, or in other words,  
the operation $*$ in $\cS$ can be computed by the time evolution \eqref{eq:FiniteSum} of the quantum walk.

Theorem \ref{thm:SMatrix} is an immediate consequence of the following Lemma \ref{thm:QWQM1}, which relates a solution to the Schr\"odinger equation to a stationary state of the quantum walk. In particular, Jost solutions $J_\ope{in}^\pm$ and $J_\ope{out}^\pm$ correspond to $\Psi_\ope{in}^\pm$ and $\Psi_\ope{out}^\pm$, respectively. 
\begin{lemma}\label{thm:QWQM1}
For a sequence of constant pairs $(\eta_{n,1},\eta_{n,2})_{n=0}^{n_0}\in\C^{2(n_0+1)}$, the following two conditions are equivalent.
\begin{enumerate}
\item  
There exists a global solution $\pphi$ on $\R$ 
to the Schr\"odinger equation \eqref{Scheq} satisfying for each
$n=0,1,\ldots,n_0$
\be\label{eq:EquivSch}
\pphi=\eta_{n,1}g_n+\eta_{n,2}\bar{g}_n.
\ee 
\item 
The state $\Psi\in \cB\cong\C^{2n_0+2}$ given by 
\be\label{eq:EquivQW}
\Psi(n)=
\begin{pmatrix}\eta_{n,1}\\\eta_{n-1,2}\end{pmatrix},\quad 
\Psi(-\infty)=\eta_{0,1},\quad\Psi(+\infty)=\eta_{n_0,2}
\ee 
satisfies the equation $\cU\Psi=\Psi$. 
\end{enumerate}
\end{lemma}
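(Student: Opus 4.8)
The plan is to establish the equivalence as a chain of \emph{if and only if} statements, the hinge being the transfer-matrix recursion hidden in condition~(1). Set $\bm\eta_n:=(\eta_{n,1},\eta_{n,2})^{T}$. First I would observe that a function agreeing with $\eta_{n,1}g_n+\eta_{n,2}\bar g_n$ on each $I_n$ is the restriction of one global solution $\pphi$ of \eqref{Scheq} exactly when its coefficient vectors on consecutive intervals are compatible across each barrier. Since $(g_{n-1},\bar g_{n-1})T_n=(g_n,\bar g_n)$ by \eqref{eq:DefTn} and each pair $(g_n,\bar g_n)$ is a basis, comparing the expansions $\pphi=(g_{n-1},\bar g_{n-1})\bm\eta_{n-1}=(g_n,\bar g_n)\bm\eta_n=(g_{n-1},\bar g_{n-1})T_n\bm\eta_n$ shows that condition~(1) holds if and only if $\bm\eta_{n-1}=T_n\bm\eta_n$ for $n=1,\dots,n_0$. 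Denoting by $a_n,b_n,c_n$ the $(1,1),(1,2),(2,1)$ entries of $U_n$ and using $T_n=\mc M^{-1}(U_n)$ from \eqref{eq:DefM}, this recursion reads $\eta_{n-1,1}=\overline{a_n}^{-1}\eta_{n,1}+a_n^{-1}b_n\eta_{n,2}$ and $\eta_{n-1,2}=\overline{a_n}^{-1}\overline{b_n}\,\eta_{n,1}+a_n^{-1}\eta_{n,2}$.

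Next I would expand $\cU\Psi=\Psi$ for the staggered state \eqref{eq:EquivQW}, where $\Psi_1(n)=\eta_{n,1}$ and $\Psi_2(n)=\eta_{n-1,2}$. Because $P_n$ keeps only the first and $Q_n$ only the second component of $U_n(\cdot)$, the time evolution \eqref{eq:TimeEvQW} together with the endpoint rules \eqref{eq:DefUonFG} decouples the two components, and I expect the whole system $\cU\Psi=\Psi$ to reduce to the two families
\[
\eta_{m-1,1}=a_m\eta_{m,1}+b_m\eta_{m-1,2},\qquad \eta_{m,2}=c_m\eta_{m,1}+a_m\eta_{m-1,2}\qquad(m=1,\dots,n_0).
\]
I would produce these carefully vertex by vertex: interior vertices $2\le n\le n_0-1$ give the first family for $3\le m\le n_0$ and the second for $1\le m\le n_0-2$; the endpoint vertices $n=1,n_0$ fill in $m=2$ and $m=n_0-1$, their complementary components being automatically satisfied by the shape of \eqref{eq:EquivQW}; and the tails at $\pm\infty$ supply $m=1$ and $m=n_0$. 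This bookkeeping, ensuring every equation appears exactly once, is routine but must be checked.

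The core step is to match these two families with the recursion $\bm\eta_{n-1}=T_n\bm\eta_n$. The second family is exactly the second component of the recursion: multiplying $\eta_{m-1,2}=\overline{a_m}^{-1}\overline{b_m}\eta_{m,1}+a_m^{-1}\eta_{m,2}$ by $a_m$ and using $\overline{a_m}^{-1}\overline{b_m}=-a_m^{-1}c_m$ (orthogonality of the columns of the unitary $U_m$) returns $\eta_{m,2}=c_m\eta_{m,1}+a_m\eta_{m-1,2}$. For the first family I would substitute the second into $\eta_{m-1,1}=a_m\eta_{m,1}+b_m\eta_{m-1,2}$ to eliminate $\eta_{m-1,2}$; the coefficient of $\eta_{m,1}$ collapses to $a_m-a_m^{-1}b_mc_m=a_m^{-1}\det U_m$, and since $\det U_m=a_m/\overline{a_m}$ for every $U_m\in\cS$, this equals $\overline{a_m}^{-1}$, turning the equation into the first component of the recursion. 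This determinant identity is the crux: it is precisely the algebraic content of $\mc M$ and of the group $\cS$ that lets a single nearest-neighbour step reproduce the action of $T_m$.

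Chaining the three equivalences gives condition~(1)$\iff$condition~(2); as \eqref{eq:EquivQW} is a linear bijection between the data $(\eta_{n,1},\eta_{n,2})_{n=0}^{n_0}$ and the states $\Psi\in\cB$, this is an exact correspondence, from which the pairing of $J_\ope{in}^\pm,J_\ope{out}^\pm$ with $\Psi_\ope{in}^\pm,\Psi_\ope{out}^\pm$ is read off via $g_0=J_\ope{out}^-$, $g_{n_0}=J_\ope{in}^+$ and the boundary values $\Psi(\pm\infty)=\eta_{0,1},\eta_{n_0,2}$. The main obstacle I anticipate is not any single deep point but keeping the entry-level identities of the matching step aligned with the vertex-by-vertex reduction of the walk; once the unitarity relations for $\cS$ are written out, the determinant identity itself is a one-line computation.
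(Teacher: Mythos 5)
Your proof is correct and follows essentially the same route as the paper's: condition (1) is reduced to the transfer-matrix recursion $\bm{\eta}_{n-1}=T_n\bm{\eta}_n$, which is then identified with the local relations $(\eta_{n-1,1},\eta_{n,2})^T=U_n(\eta_{n,1},\eta_{n-1,2})^T$ encoded by $\cU\Psi=\Psi$. The paper compresses that second step into ``by definition of $U_n$'', whereas you verify it entrywise via the unitarity and determinant identities for $\cS$ and carry out the vertex-by-vertex bookkeeping explicitly; both the identities and the bookkeeping check out.
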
 
\begin{remark}Lemma \ref{thm:QWQM1} holds without the condition (V1). 
\end{remark}
\begin{proof}
By definition \eqref{eq:DefTn} of $T_n$, the first condition \eqref{eq:EquivSch} is equivalent to 
$$\begin{pmatrix}\eta_{n-1,1}\\\eta_{n-1,2}\end{pmatrix}
=T_n
\begin{pmatrix}\eta_{n,1}\\\eta_{n,2}\end{pmatrix}\quad
\text{for}\quad n=1,2,\ldots,n_0.
$$
In fact, we have
$$
(g_{n-1},\bar{g}_{n-1})\begin{pmatrix}\eta_{n-1,1}\\\eta_{n-1,2}\end{pmatrix}
=
\pphi
=(g_n,\bar{g}_n)\begin{pmatrix}\eta_{n,1}\\\eta_{n,2}\end{pmatrix}
=(g_{n-1},\bar{g}_{n-1})T_n\begin{pmatrix}\eta_{n,1}\\\eta_{n,2}\end{pmatrix}.
$$
Then by definition 
of $U_n$, this is equivalent to 
$$
\begin{pmatrix}\eta_{n-1,1}\\\eta_{n,2}\end{pmatrix}
=
U_n\begin{pmatrix}\eta_{n,1}\\\eta_{n-1,2}\end{pmatrix}
\quad\text{for}\quad n=1,2,\ldots,n_0.
$$
We can denote this condition by $\cU\Psi=\Psi$ with the state $\Psi$ given by \eqref{eq:EquivQW}.
\end{proof}

\section{Barrier-top scattering and Hadamard walk}\label{Sec:BTandHW}
As a natural and interesting example, we consider the barrier-top scattering for Schr\"odinger equation. We suppose that the potential $V\in C^\infty(\R;\R)$ satisfies the following condition in addition to (V1) and (V2):

\noindent
\textbf{(B1)} The maximum $V_0:=\max_{x\in\R}V(x)$ is positive and the maximum points are all non-degenerate, i.e., $V''(x)<0$ if $V(x)=V_0$.

Following \cite{Ra,FR,Fu},  we also assume the analyticity of the potential (see also \cite{CPS} for the $C^\infty$ case).

\noindent
\textbf{(B2)}
There exists a real number $\theta_0\in(0,\pi/2)$ such that the function $V$ is analytic in the angular complex domain 
$$
\cS_{\theta_0}=\{x\in\C;\,\left|\im x\right|<(1+\left|\re x\right|)\tan\theta_0\}.
$$

We consider energies $\lambda$ close to the barrier-top $V_0$, more precisely, in an interval $\Lambda_h(C):=V_0-[h^M,C_0h]$
for arbitrary $h$-independent constants $C_0>0$,  $M>1$.
Let $V^{-1}(V_0)=\{o_n;\,n=1,2,\ldots,n_0\}$ with $-\infty<o_1<o_2<\cdots<o_{n_0}<+\infty$, $n_0\in\N\setminus\{0\}$ be the maximum points. There exists exactly two real zeros $ x_{2n-1}(\lambda)<o_n<x_{2n}(\lambda)$ of $V(x)-\lambda$ near each $o_n$ for $\lambda\in\Lambda_h(C)$. The interval $K_n$ in section 2.2 is $[x_{2n-1}(\lambda),x_{2n}(\lambda)]$ here.

In each classically allowed interval $I_n=(x_{2n}(\lambda),x_{2n+1}(\lambda))$, $1\le n\le n_0-1$,
there exists a WKB solution $g_n$ to the Schr\"odinger equation 
\eqref{Scheq} whose asymptotic behavior as $h\to +0$ is given by
$$
g_n(x;\lambda,h)=\frac{1+\ord(h)}{\sqrt[4]{\lambda-V(x)}}e^{- i\int_{x_{2n}(\lambda)}^x\sqrt{\lambda-V(y)}\,dy/h}.
$$
Its complex conjugate $\bar g_n$ is also a solution to \eqref{Scheq}
and the pair $(g_n,\bar g_n)$ makes a basis of solutions. Moreover, we can choose $g_n$ so that the wronskian $\W(g_n,\bar{g_n})$ is equal to 
$2i/h$, which is independent of the interval.


In this setting, the semiclassical asymptotics of the transfer matrices $T_n$ were obtained in \cite{Ra,FR,Fu}. In the asymptotics, there appear the action integrals associated with the arcs:
$$
S_{n}(\lambda):=\int_{I_n}\sqrt{\lambda-V(x)}\,dx,
$$
and those associated with the tails $I_-:=(-\infty, x_1)$, $I_+:=(x_{2n_0},+\infty)$:
\begin{align*}
S_-(\lambda)&:=\int_{I_-}\left(\sqrt{\lambda}-\sqrt{\lambda-V(x)}\right)dx-\sqrt{\lambda}x_1, \\
S_+(\lambda)&:=\int_{I_+}\left(\sqrt{\lambda}-\sqrt{\lambda-V(x)}\right)dx+\sqrt{\lambda}x_{2n_0},
\end{align*}
and also the Agmon distance of the barrier
$$
B_n(\lambda):=\int_{K_n}\sqrt{V(x)-\lambda}\,dx.
$$
Remark that $B_n(\lambda)$ is a holomorphic function in a neighborhood of $\lambda=V_0$ where as $S_n(\lambda)$, $S_\pm(\lambda)$ have logarithmic singularity there (see \cite{FR}).
Related with this fact, the asymptotics contain the following function
$$
N_n^\pm(\lambda):=N\left(e^{\pm i\pi/2}\frac{B_n(\lambda)}{\pi h}\right),\quad N(z):=\frac{\sqrt{2\pi}}{\Gamma(1/2+z)}e^{z\log(z/e)}.
$$
The multi-valued function $N(z)$ is called {\it barrier penetration factor} in \cite{LTM}. It tends to $\sqrt{2}$ as $z\to0$ and to $1$ as $\left|z\right|\to+\infty$ in the sector $\left|\arg z\right|<\pi$ by the Stirling's formula. 

Denote
$g_0:=\lambda^{-1/4}J_\ope{in}^-$ and $g_{n_0}:=\lambda^{-1/4}J_\ope{in}^+$. Then the transfer matrices $T_n$ defined by
$(g_{n-1},\bar{g}_{n-1})T_n=(g_n,\bar{g}_n)$ for 
$1\le n\le n_0$ are of the form
$$
T_1=T^-T_{o_1},\ 
T_n=T_{I_{n-1}}T_{o_n}\,(n=2,3,\ldots,n_0-1), \ 
T_{n_0}=T_{I_{n_0-1}}T_{o_{n_0}}(T^+)^{-1},
$$
where the matrices $T^\pm$, $T_{o_n}$, $T_{I_n}$ satisfy the following asymptotics as $h\to +0$ for $\lambda\in\Lambda_h(C)$:
$$
T^\pm(\lambda)=
\begin{pmatrix}
e^{iS_\pm/h}(1+\ord(h))&\ord(e^{-\e/h})\\
\ord(e^{-\e/h})&e^{-iS_\pm/h}(1+\ord(h))
\end{pmatrix},
$$
$$
T_{I_n}(\lambda)=
\begin{pmatrix}
e^{iS_{n}/h}(1+\ord(h))&\ord(e^{-\e/h})\\
\ord(e^{-\e/h})&e^{-iS_{n}/h}(1+\ord(h))
\end{pmatrix},
$$
$$
T_{o_n}(\lambda)=e^{B_n/h}
\begin{pmatrix}
N_n^+(1+\ord(h\log h))&1+\ord(h)\\
1+\ord(h)&N_n^-(1+\ord(h\log h))
\end{pmatrix}.
$$
These asymptotics of transfer matrices are otained by using the reduction to the normal form near the barrier top (see also \cite{HeSj}). 

As we saw in the previous section, we can associate to this quantum mechanics on the line a quantum walk
on the graph $[n_0]$ by giving the unitary operators $U_n$ at each vertex $n=1,2,\ldots,n_0$ by
$U_n:=\mathcal{M}(T_n)\in\mathcal S$.

It is interesting to observe the semiclassical limit of the quantum walk corresponding to this barrier-top scattering.

The principal term $U_n^0=U_n^0(\lambda,h)$ of $U_n$ is given by
\begin{align*}
&U_1^0(\lambda,h)=\frac{e^{(iS_--B_1)/h}}{N_1^-}
\begin{pmatrix}
1&e^{(B_1+iS_-)/h}\\-e^{(B_1-iS_-)/h}&1
\end{pmatrix},\\
&U_n^0(\lambda,h)=\frac{e^{(iS_{n-1}-B_n)/h}}{N_n^-}
\begin{pmatrix}
1&e^{(B_n+iS_{n-1})/h}\\-e^{(B_n-iS_{n-1})/h}&1
\end{pmatrix},\\
&U_{n_0}^0(\lambda,h)=\frac{e^{(i(S_{n_0-1}-S_+)-B_{n_0})/h}}{N_{n_0}^-}
\begin{pmatrix}
1&e^{(B_{n_0}+i(S_{n_0-1}+S_+))/h}\\-e^{(B_{n_0}-i(S_{n_0-1}+S_+))/h}&1
\end{pmatrix}.
\end{align*}
In particular, for $\lambda\in\Lambda_h(C)$ satisfying $V_0-\lambda=o(h)$, we have
\begin{align*}
&U_1^0(\lambda,h)=\frac{e^{iS_-/h}}{\sqrt{2}}
\begin{pmatrix}
1&e^{iS_-/h}\\-e^{-iS_-/h}&1
\end{pmatrix},\\ 
&U_n^0(\lambda,h)=\frac{e^{iS_{n-1}/h}}{\sqrt{2}}
\begin{pmatrix}
1&e^{iS_{n-1}/h}\\-e^{-iS_{n-1}/h}&1
\end{pmatrix},\\ 
&U_{n_0}^0(\lambda,h)=\frac{e^{i(S_{n_0-1}-S_+)/h}}{\sqrt{2}}
\begin{pmatrix}
1&e^{i(S_{n_0-1}+S_+)/h}\\-e^{-i(S_{n_0-1}+S_+)/h}&1
\end{pmatrix}.
\end{align*}
This is a version of the Hadamard walk, that is, the square of the absolute value of each entry is $1/2$. 
The value of each entries is depend analytically on $\lambda$ in some complex neighborhood of $\Lambda_h(C)$, and the rate of local transmission coefficient to reflection coefficient, the one of the square of the absolute values of $(1,1),(2,2)$-entry to $(1,2),(2,1)$-entry of $U_n$, monotonically decreasing as $\lambda\to \min\Lambda_h(C)=V_0-Ch$.

\section*{Acknowledgements}
The author is grateful to Ritsumeikan University for the finantial support, KENKYU-SHOREI Scholarship A.

Kenta Higuchi, 
Department of Mathematical Sciences, 
Ritsumeikan University, 
1-1-1 Noji-Higashi, Kusatsu, 
525-8577,  Japan

\textit{E-mail address}: ra0039vv@ed.ritsumei.ac.jp

\end{document}